\newcommand{\comment}[1]{}
\newcommand{\seclabel}[1]{\label{sec:#1}}
\newcommand{\secref}[1]{\mbox{Section~\ref{sec:#1}}}
\renewcommand{\eqref}[1]{(\ref{eq:#1})}
\newtheorem{thm}{Theorem}{\bfseries}{\itshape}
\newcommand{\thmlabel}[1]{\label{thm:#1}}
\newcommand{\thmref}[1]{Theorem~\ref{thm:#1}}
\newtheorem{lem}{Lemma}{\bfseries}{\itshape}
\newcommand{\lemlabel}[1]{\label{lem:#1}}
\newcommand{\lemref}[1]{Lemma~\ref{lem:#1}}
\newtheorem{assumption}{Assumption}{\bfseries}{\rm}
\newcommand{\etal}{\emph{et al.}}
\newcommand{\xx}{\ensuremath{\protect{x}}}
\newcommand{\yy}{\ensuremath{\protect{y}}}
\newcommand{\zz}{\ensuremath{\protect{z}}}
\newcommand{\fg}{\ensuremath{\mathcal{F}}}
\newcommand{\po}{\ensuremath{<_\mathcal{F}}}
\newcommand{\x}{\ensuremath{\protect\textup{\textsf{x}}}}
\newcommand{\y}{\ensuremath{\protect\textup{\textsf{y}}}}
\renewcommand{\thefootnote}{\fnsymbol{footnote}}
\title{\MakeUppercase{The Utility of Untangling}} 
\author{Vida Dujmovi\'c\footnote{School of Computer Science and Electrical Engineering, University of Ottawa, 
Ottawa, Canada, \texttt{vida.dujmovic@uottawa.ca}. Research supported by NSERC and the Ontario Ministry of Research and Innovation.}}
\date{}
\begin{document}

\maketitle

\renewcommand{\thefootnote}{\arabic{footnote}}



\begin{abstract}
In this note we show how techniques developed for untangling planar graphs by  Bose \etal\ [Discrete \& Computational Geometry 42(4): 570-585 (2009)] and Goaoc \etal\  [Discrete \& Computational Geometry 42(4): 542-569 (2009)]  imply new results about some recent graph drawing models. These include column planarity,  universal point subsets, and partial simultaneous geometric embeddings (with or without mappings). Some of these results answer open problems posed in previous papers. 
%
%
%
%
\end{abstract}
%
%

\section{Introduction}\seclabel{intro}
A \emph{geometric graph} is a graph whose vertex set is a set of distinct points in the plane and each pair of adjacent vertices $\{v, w\}$ is connected by a line segment $\overline{vw}$ that intersects only the two vertices. A geometric graph is \emph{planar} if its underlying combinatorial graph is planar. It is \emph{plane} if no two edges cross other than in a common endpoint. A \emph{straight-line crossing-free drawing}  of a planar graph is a representation of that graph by a plane geometric graph. 


Given a geometric planar graph, possibly with many crossings, to \emph{untangle} it, means to move some of its vertices to new locations  (that is, change their coordinates) such that the resulting geometric graph is plane. 
The goal is to do so by moving as few vertices as possible, or in other words, by keeping the locations of as many vertices as possible unchanged (that is, \emph{fixed}). A series of papers have studied untangling of planar graphs or subclasses of planar graphs \cite{DBLP:journals/dcg/PachT02, DBLP:journals/siamdm/CanoTU14, DBLP:journals/dcg/Cibulka10, DBLP:journals/dcg/BoseDHLMW09, DBLP:journals/dcg/GoaocKOSSW09, DBLP:journals/dam/KangPRSV11, DBLP:conf/wg/RavskyV11}. 
 The best known (lower) bound for general planar graphs is due to Bose \etal~\cite{DBLP:journals/dcg/BoseDHLMW09} who proved that every $n$-vertex geometric planar graph can be untangled while keeping the locations of at least $\Omega(n^{1/4})$ vertices fixed.  On the other hand, Cano~\etal~\cite{DBLP:journals/siamdm/CanoTU14}  showed that for all large enough $n$, there exists an $n$-vertex geometric planar graph that cannot be untangled while keeping the locations of more than $\omega(n^{0.4948})$ vertices fixed. 

The purpose of this note is to highlight how the techniques developed by Bose \etal\ \cite{DBLP:journals/dcg/BoseDHLMW09} and Goaoc \etal~\cite{DBLP:journals/dcg/GoaocKOSSW09}  can be used to establish new results on several recently studied graph drawing problems. 
Before presenting the new results we state the two key lemmas that are at the basis of all the results.  The statements of these two lemmas are new, but their proofs are  contained in and directly inferred by the work described in 
 \cite{DBLP:journals/dcg/BoseDHLMW09} and \cite{DBLP:journals/dcg/GoaocKOSSW09}.



Let $G$ be a plane triangulation (that is, an embedded simple planar graph each of whose faces is bounded by a $3$-cycle). 
Canonical orderings of plane triangulations were introduced by
de Fraysseix~\etal~\cite{dFPP90}. They proved that
$G$ has a vertex ordering $\sigma=(v_1:=\xx,v_2:=\yy, v_3, \dots,
v_n:=\zz)$, called a \emph{canonical ordering}, with the following
properties. Define $G_i$ to be the embedded subgraph of $G$ induced by
$\{v_1, v_2,\dots,v_i\}$.  Let $C_i$ be the subgraph of $G$ induced
by the edges on the boundary of the outer face of $G_i$. Then
\begin{itemize}[noitemsep, nolistsep]
\item \xx, \yy\ and \zz\ are the vertices on the outer face of $G$. 
\item  For each $i\in\{3,4,\dots,n\}$, $C_i$ is a cycle containing $\xx\yy$.
\item  For each $i\in\{3,4,\dots,n\}$, $G_i$ is biconnected and \emph{internally $3$-connected}; that is, removing any two interior vertices of $G_i$ does not disconnect it.
\item For each $i\in\{3,4,\dots,n\}$, $v_i$ is a vertex of $C_i$ with at least two neighbours in $C_{i-1}$, and these neighbours are consecutive on $C_{i-1}$.
\end{itemize}

The following structure was defined first in Bose~\etal~\cite{DBLP:journals/dcg/BoseDHLMW09}. Using the above notation, a \emph{frame} \fg\ of $G$ is the oriented subgraph of $G$ with vertex set  $V(\fg):=V(G)$, where: 
\begin{itemize}[noitemsep, nolistsep]
\item Edges \xx\yy, $\xx v_1$ and $v_1\yy$\ are in $E(\fg)$ where \xx\yy\ is oriented from \xx\ to \yy,  $\xx v_1$  is oriented from \xx\ to  $v_1$ and   $v_1\yy$\ is oriented from $v_1$ to $\yy$.
\item For each $i\in\{4,5\dots,n\}$ in the canonical ordering $\sigma$ of $G$, edges $pv_i$ and $v_ip'$ are in $E(\fg)$, where $p$ and $p'$ are the first and the last neighbour, respectively, of $v_i$ along the path in $C_{i-1}$ from \xx\ to \yy\ not containing edge \xx\yy. Edge $pv_i$ is oriented from $p$ to $v_i$, and edge $v_ip'$ is oriented from $v_i$ to $p'$. 
\end{itemize}

By definition, \fg\ is a directed acyclic graph with one source \xx\, and one sink \yy. \fg\ defines a partial order \po\ on $V(\fg)$, where $v\po w$ whenever there is a directed path from $v$ to $w$ in \fg. 

Subsequently, it has been observed that a frame of $G$ can also be obtained by taking the union of any two trees in  Schnyder 3-tree-decompositions where the  orientation of the edges in one of the two trees is reversed. See, for example, page $13$ in Di Giacomo~\etal~\cite{DBLP:journals/corr/abs-1212-0804} for this alternative formulation. 

Recall that a \emph{chain} (\emph{antichain}) in a partial order is a subset of its elements that are pairwise comparable (incomparable).
Given a partial order $(V,\leq)$ on a set of vertices $V$ of some graph, we will often refer to a \emph{chain $V'\subseteq V$}  (or antichain) and by that mean a subset of vertices of $V$ that form a chain (antichain) in the given partial order $(V, \leq)$.   We also say that a chain $V'$ contains a chain $V''$ if $V'$ and $V''$ are both chains in $(V,\leq)$ and $V''\subseteq V'$. 

Consider an $n$-vertex planar graph $G$ and a set $P$ of $k\leq n$ points in the plane together with a bijective mapping from a set $V_k$ of $k$ vertices in $G$ to $P$.
 Let $D$ be a straight-line crossing-free drawing of $G$. We say that $D$ \emph{respects} the given mapping if each vertex of $V_k$ is represented in $D$ by its image point as determined by the given mapping. 

The following two lemmas are implicit in the work of  Bose \etal\ \cite{DBLP:journals/dcg/BoseDHLMW09} and Goaoc \etal~\cite{DBLP:journals/dcg/GoaocKOSSW09}. Parts (b), (c) and consequently (d), in \lemref{chain}, are due to Goaoc \etal~\cite{DBLP:journals/dcg/GoaocKOSSW09}. Note that, unlike here, the results of Goaoc \etal~\cite{DBLP:journals/dcg/GoaocKOSSW09} are not expressed in terms of a chain in the frame of $G$ but an equivalent structure: a simple path $L$ in a plane triangulation, connecting two vertices $x$ and $y$ on the outer face $x,y,z$ with the property that all chords of $L$ lie on one side of $L$ and $z$ lies on the other. 

Consider a graph $G$, a set $S\subseteq V(G)$ and a set $P$ of $|S|$ points in the plane together with a bijective mapping from $S$ to $P$. For a vertex $v\in S$ mapped to a point $p\in P$, let $\x(v)$ denote the \x-coordinate of $p$. 

\begin{lem}\lemlabel{chain}~\cite{DBLP:journals/dcg/GoaocKOSSW09, DBLP:journals/dcg/BoseDHLMW09}
Let $G$ be an $n$-vertex plane triangulation with a partial order \po\ associated with a frame \fg\ of $G$. Let $C\subseteq V$ be a chain 
 in \po. Let $H$ be the graph induced in $G$ by a maximal chain that contains $C$ in \po. The embedding of $H$ is implied by the embedding of $G$. Then: 
\begin{enumerate}[noitemsep, nolistsep]
\item[(a)] $H$ is a $2$-connected outerplane graph, i.e. a $2$-connected embedded outerplanar graph all of whose vertices lie on the cycle bounding the infinite face.  
\item[(b)] Let $I\subseteq V(H)$ such that if $v, w\in I$ and $vw \in E(H)$ then $vw$ lies on the outer face of $H$.   Let $P$ be any set of $|I|$ points in the plane where no two points of $P$ have the same \x-coordinate. Given a bijective mapping from $I$ to $P$ such that, for every two vertices $v,w\in I$,  $v\po w$ if and only if $\x(v)<\x(w)$,
 there exists a straight-line crossing-free drawing of $G$ that respects the given mapping. 
\item[(c)] There exists such a set $I$ with at least $(V(H)+1)/2$ vertices.
\item[(d)] There exists such a set $I$ with at least $|C|/3$ vertices of $C$. 
\end{enumerate}
\end{lem}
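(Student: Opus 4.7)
For part (a), I would begin by identifying the maximal chain containing $C$ with a maximal directed $\xx$-to-$\yy$ path $L$ in the frame $\fg$. Since $\fg\subseteq G$, $L$ is a simple path in $G$, and together with the frame edge $\xx\yy$ it forms a cycle $C_L$. The construction of $\fg$ from the canonical ordering (equivalently, from the pair of Schnyder trees mentioned in the excerpt) ensures that $\zz$ and every $G$-edge between two vertices of $L$ lie on the same side of $L$, namely inside $C_L$. Consequently every vertex of $H = G[V(L)]$ sits on $C_L$, so $H$ is outerplane with outer face $C_L$ and is 2-connected by the presence of this bounding cycle.

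For part (b), I plan to invoke the incremental canonical-ordering drawing of de Fraysseix--Pach--Pollack, modified in the style of \cite{DBLP:journals/dcg/BoseDHLMW09} so that the prescribed $x$-coordinates on $I$ are honored. The conditions on $I$ --- that no two of its vertices are joined by an interior chord of $H$, and that the prescribed $x$-coordinates on $I$ realize $\po$ --- are precisely what the shift method requires to anchor each $v\in I$ at its prescribed position while embedding the remaining vertices of $G$ without crossings. I would rely on the detailed construction in \cite{DBLP:journals/dcg/BoseDHLMW09, DBLP:journals/dcg/GoaocKOSSW09} rather than rederive it in full.

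For part (c), I reduce to the claim that any 2-connected outerplane graph on $k$ vertices admits a set $I$ of size at least $(k+1)/2$ containing no chord. I would induct on $k$ via the weak dual tree: a leaf face $F$ is bounded by a single chord $u_iu_j$ together with outer-cycle edges $u_iu_{i+1},\dots,u_{j-1}u_j$, and its $j-i-1 \geq 1$ interior vertices are chord-free, so they can be added en masse to $I$. Deleting them yields a smaller 2-connected outerplane graph $H'$ in which $u_iu_j$ becomes a boundary edge; the inductive hypothesis applies. The delicate point --- which I expect to be the main obstacle --- is the case where the recursive $I'$ contains both $u_i$ and $u_j$, creating a forbidden chord in $H$; to keep the $(k+1)/2$ bound one must strengthen the inductive invariant so that we can block one endpoint of a designated boundary edge. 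For part (d), the chord subgraph $H_c$ of $H$ (obtained by deleting the outer-cycle edges) is a subgraph of the outerplanar graph $H$, hence itself outerplanar, and therefore admits a proper 3-coloring; by pigeonhole some color class contains at least $|C|/3$ vertices of $C$, and that class is a valid $I$.
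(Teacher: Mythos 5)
You should first be aware that the paper does not actually prove parts (a)--(c): it states that they are implicit in, and directly inferred from, \cite{DBLP:journals/dcg/BoseDHLMW09} and \cite{DBLP:journals/dcg/GoaocKOSSW09}, and the only argument it supplies is the derivation of (d) from (a) and (b). Your proof of (d) is essentially the paper's: properly 3-colour an outerplanar ``chord'' graph and take, by pigeonhole, a colour class with at least $|C|/3$ vertices of $C$; the class satisfies the hypothesis of (b) because no two of its members are joined by a chord. (The paper colours $H[C]$ rather than the chord graph of all of $H$, but the two are interchangeable here.) Your treatment of (b) --- deferring to the shift-method construction of the cited papers --- is likewise what the paper does. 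For (a), your sketch is in the right spirit, but note that your claim that $\zz$ lies \emph{inside} the cycle $L\cup\{\xx\yy\}$ contradicts the paper's own description of the equivalent structure from Goaoc~\etal, in which the chords of $L$ lie on one side of $L$ and $\zz$ on the \emph{other}; this does not affect the conclusion of (a), which only needs the chords to lie inside that cycle, but you should fix the statement.

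The one genuine gap is part (c). Your induction on the weak dual tree does not close: as you yourself observe, when the leaf face is cut off along the chord $u_iu_j$ and the inductive set $I'$ for the smaller graph contains both $u_i$ and $u_j$, the set $I'$ together with the interior vertices of the leaf face violates the chord-freeness condition of (b) in the original graph $H$, and simply discarding one of $u_i,u_j$ can drop you below the $(|V(H)|+1)/2$ bound. Flagging that ``one must strengthen the inductive invariant'' identifies the difficulty but does not resolve it; a correct strengthening (e.g., carrying through the induction a designated outer edge one of whose endpoints is guaranteed to be avoidable without loss) needs to be stated and verified, since the numerology of $(k+1)/2$ is tight. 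The paper avoids this entirely by attributing (c) to Goaoc~\etal, so your attempt is more ambitious than the paper's proof but, as written, incomplete precisely at the step you singled out.
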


While the lower bound in part (c)  is stronger than the lower bound in part (d), part (d) ensures that a fraction of vertices of $C$ are used. That will be critical for some applications (see \thmref{scp} in \secref{cp} and \thmref{ppsge} in \secref{psgwm}). Part (d) follows from (b) as follows. Consider the graph $H'$ induced in $H$ by the vertices of $C$. By part (a), $H'$ is outerplanar. Thus its vertices can be coloured with three colours such that adjacent vertices in $H'$ receive distinct colours. Thus there exists an independent set $I$ in $H'$ that contains at least $|C|/3$ vertices of $C$. The condition imposed on the vertex set $I$ in part (b) are immediate since $I$ is an independent set in $H'$ and $H$. 



Note that, in an interesting recent development, Di Giacomo~\etal~\cite{DBLP:conf/wg/GiacomoLM13} proved that every $n$-vertex plane 
triangulation has a frame where some chain has size at least $n^{1/3}$. Thus by part (a), $|V(H)|\geq n^{1/3}$ in that frame and consequently,  every $n$-vertex plane triangulation has a $2$-connected outerplane graph of size at least  $n^{1/3}$ as an embedded induced subgraph. 

The following is the second key lemma.

\begin{lem}\lemlabel{antichain}~\cite{DBLP:journals/dcg/BoseDHLMW09}
Let $G$ be an $n$-vertex plane triangulation with a partial order \po\ associated with a frame \fg\ of $G$ and the total order $<_\sigma$ associated with the corresponding canonical ordering. Let $A\subseteq V$ be an antichain in \po. Let $P$ be any set of $|A|$ points in the plane where no two points of $P$ have the same \x-coordinate.  Given a bijective mapping from $A$ to $P$ such that, for every two vertices $v,w\in A$,  $v<_\sigma w$ if and only if $\x(v)<\x(w)$,
 then there exists a straight-line crossing-free drawing of $G$ that respects the given mapping. 
\end{lem}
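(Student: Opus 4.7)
The plan is to prove the lemma by induction on the canonical ordering, using the incremental shift-based drawing algorithm of de Fraysseix-Pach-Pollack as adapted in Bose \etal~\cite{DBLP:journals/dcg/BoseDHLMW09}. I will process the vertices in canonical order $v_1 = \x$, $v_2 = \y$, $v_3, \ldots, v_n = \z$, maintaining after step $i$ the invariants that the drawing of $G_i$ is planar, that $C_i \setminus \{\x\y\}$ is an $x$-monotone path from $\x$ to $\y$, and that every antichain vertex in $A \cap V(G_i)$ sits at its prescribed point. Since $\x$ and $\y$ are the unique source and sink of $\mathcal{F}$, both are comparable to every other vertex, so any antichain of size at least two excludes them; the $|A| \leq 1$ case is trivial.

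For the inductive step, when adding $v_i$ to $G_{i-1}$, its neighbours on $C_{i-1}$ form a contiguous subpath $p = w_1, \ldots, w_k = p'$, and the frame edges $p \to v_i$ and $v_i \to p'$ give $p <_\mathcal{F} v_i <_\mathcal{F} p'$; in particular, if $v_i \in A$ then $p, p' \notin A$. If $v_i \notin A$, I apply the standard shift to assign $v_i$ an $x$-coordinate strictly between $\x(p)$ and $\x(p')$ and a $y$-coordinate large enough to avoid crossings. If $v_i \in A$, I place $v_i$ at its prescribed point.

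The main obstacle is ensuring the prescribed point is always admissible when $v_i \in A$: the prescribed $\x(v_i)$ must lie in the open interval $(\x(p), \x(p'))$, the prescribed $\y(v_i)$ must sit above the relevant portion of the current drawing, and previously placed antichain vertices must not be displaced by any shift operation. The resolution exploits that shift classes in the FPP procedure correspond to frame-induced dependencies---two vertices are forced to move together only if they are $<_\mathcal{F}$-comparable---so no two antichain vertices can ever share a shift class. This guarantees that the $x$-coordinates of antichain vertices can be assigned independently according to the hypothesis $v <_\sigma w \Leftrightarrow \x(v) < \x(w)$, and that non-antichain vertices can be shifted to place $\x(p)$ and $\x(p')$ on opposite sides of the prescribed $\x(v_i)$ without disturbing previously placed antichain vertices. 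The flexibility in the $y$-coordinates of non-antichain vertices (they need only lie above the current outer path) then lets each prescribed $\y(v_i)$ be accommodated in turn. The detailed verification is a careful bookkeeping of the shift classes exactly along the lines of the argument in Bose \etal~\cite{DBLP:journals/dcg/BoseDHLMW09}, from which the statement is implicit.
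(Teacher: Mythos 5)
Your sketch has to stand on its own here: the paper offers no proof of this lemma, importing it wholesale from Bose \etal\ \cite{DBLP:journals/dcg/BoseDHLMW09} (its Lemmas 1 and 6), and the argument there is not the incremental shift algorithm you propose. Judged on its merits, the sketch has a genuine gap at its central claim that ``two vertices are forced to move together only if they are \po-comparable.'' In the de Fraysseix--Pach--Pollack procedure, when $v_i$ is inserted its \emph{covered} neighbours $w_{l+1},\dots,w_{r-1}$ are merged into $v_i$'s shift class, but the frame records only the edges $w_l\to v_i$ and $v_i\to w_{r}$; the covered vertices are in general \emph{incomparable} to $v_i$. Already in $K_4$ (outer face $\x,\y,v_4$, inner vertex $v_3$) the vertex $v_3$ is covered by $v_4$, the pair $\{v_3,v_4\}$ is an antichain, and yet $L(v_4)=\{v_3,v_4\}$. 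So two antichain vertices can share a shift class. Moreover, even a lone antichain vertex sitting in a class that gets shifted is moved off its \emph{absolutely} prescribed point; to make your plan work you would need each antichain vertex to be shifted by a known total amount that can be pre-compensated, and nothing in the sketch establishes this.

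The deeper obstruction is the arbitrary prescribed \y-coordinates, which are the entire content of the lemma (and what the untangling and column-planarity applications consume). Any FPP-style insertion places $v_i$ above the current outer path so that it sees $w_l,\dots,w_r$ and the path stays \x-monotone; but the prescribed point of $v_i\in A$ may lie below all of them, and neither horizontal shifts nor the freedom in the \y-coordinates of non-antichain vertices can repair this, because an already-placed neighbour of $v_i$ may itself be an antichain vertex pinned at an arbitrary position (a covered $w_j$ with $l<j<r$ is adjacent to $v_i$ and incomparable to it, so both may lie in $A$). Even in the $K_4$ example, prescribing $\y(v_4)\ll\y(v_3)$ forces one to choose the locations of $\x$ and $\y$ --- the canonically \emph{first} vertices --- as a function of the prescribed points of the canonically \emph{later} ones; no left-to-right, bottom-up incremental scheme with horizontal shifts does this. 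This is precisely why the proof in Bose \etal\ places all non-antichain vertices globally, as a function of the entire prescribed point set, rather than one at a time.
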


\section{Column Planarity}\seclabel{cp}
Given a planar graph $G$, a set $R\subset V(G)$ is \emph{column planar} in $G$ if the vertices of $R$ can be assigned \x-coordinates such that given any arbitrary assignment of \y-coordinates to $R$, there exists a straight-line crossing free drawing of $G$ that respects the implied mapping of vertices of $R$ to the plane. 

The column planar sets were first defined 
 by Evans~\etal~\cite{DBLP:conf/gd/EvansKSS14}. A slightly stronger notion\footnote{with the roles of \x\ and \y\ coordinates reversed} was used earlier (although not named) in \cite{DBLP:journals/dcg/BoseDHLMW09} (see Lemma 1 and Lemma 6 in \cite{DBLP:journals/dcg/BoseDHLMW09}) where such sets were studied and used to prove \lemref{antichain} in the previous section.  In particular, define a set $R\subset V(G)$ as \emph{strongly column planar}  if the following holds:  there exists a total order $\mu$ on $R$ such that 
\begin{enumerate}[noitemsep, nolistsep]
\item[(a)] given any set $P$ of $|R|$ points in the plane where no two points have the same \x-coordinate; and,
\item[(b)] given a bijective mapping from $R$ to $P$ such that, for every two vertices $v,w\in R$,  $v<_\mu w$ if and only if $\x(v)<\x(w)$,
\end{enumerate}
then there exists a straight-line crossing-free drawing of $G$ that respects the given mapping. 
%
%
Being strongly column planar implies being column planar but not the converse. We use this slightly stronger notion as it is needed in the later sections. 

Notions similar to column planarity were studied by Estrella-Balderrama \etal~\cite{DBLP:journals/comgeo/EstrellaBalderramaFK09} and 
Di Giacomo~\etal~\cite{DBLP:journals/jgaa/GiacomoDKLS09}.

It is implicit in the work of Bose~\etal~\cite{DBLP:journals/dcg/BoseDHLMW09}  (see the proof of Lemma 2 in \cite{DBLP:journals/dcg/BoseDHLMW09}) that every tree has a strongly column planar set of size at least $n/2$. For column planar sets, this result is improved to $14n/17$  by Evans~\etal~\cite{DBLP:conf/gd/EvansKSS14}. Having a bound greater than $n/2$ is critical for an application of column planarity to partial simultaneous geometric embedding with mapping \cite{DBLP:conf/gd/EvansKSS14}.  Barba~\etal~\cite{BHK} prove that every $n$-vertex outerplanar graph has a column planar set of size at least $n/2$.\footnote{We suspect that the results and proofs in both \cite{BHK}  and \cite{DBLP:conf/gd/EvansKSS14} also hold for strongly column planar sets but we have not verified that.}

Evans~\etal~\cite{DBLP:conf/gd/EvansKSS14} pose as an open problem the question of developing any bound for column planar sets in general planar graphs. We provide here the first non-trivial (that is, better than constant) bound for this problem.

\begin{thm}\thmlabel{cp}
For every $n$, every $n$-vertex planar graph $G$ has a (strongly) column planar set of size at least $\sqrt{n/2}$.
\end{thm}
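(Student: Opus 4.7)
The plan is to augment $G$ to a plane triangulation $G'$ (a strongly column planar set in $G'$ remains strongly column planar in $G$, because a straight-line crossing-free drawing of $G'$ respecting a given mapping restricts to such a drawing of $G$). Fix a canonical ordering $\sigma$ of $G'$ and the associated frame $\fg$, giving the partial order $\po$ on $V(G')$. The proof then applies \lemref{antichain} and \lemref{chain} to a chain/antichain in $\po$ supplied by Dilworth's theorem.

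More precisely, set $k:=\lceil\sqrt{n/2}\,\rceil$. By Dilworth's theorem, either $\po$ has an antichain of size at least $k$, or $V(G')$ can be partitioned into fewer than $k$ chains, in which case some chain has size at least $n/k \geq \sqrt{2n}$. In the first case, let $A$ be such an antichain and apply \lemref{antichain}: the total order $<_\sigma$ restricted to $A$ witnesses that $A$ is strongly column planar in $G'$, and $|A|\geq \sqrt{n/2}$.

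In the second case, let $C$ be a chain of size at least $\sqrt{2n}$. Extend $C$ to a maximal chain $C^\ast$ in $\po$, and let $H$ be the subgraph of $G'$ induced by $C^\ast$, so $|V(H)|=|C^\ast|\geq|C|\geq\sqrt{2n}$. By \lemref{chain}(a), $H$ is a $2$-connected outerplane graph, and by \lemref{chain}(c) there exists $I\subseteq V(H)$ satisfying the hypothesis of \lemref{chain}(b) with $|I|\geq (|V(H)|+1)/2 \geq \sqrt{2n}/2 = \sqrt{n/2}$. Since $V(H)$ is a chain, $\po$ restricted to $V(H)$ (and hence to $I$) is a total order $\mu$; \lemref{chain}(b) then guarantees that for every point set with distinct \x-coordinates and every mapping $I\to P$ with $v <_\mu w \iff \x(v) < \x(w)$, there is a straight-line crossing-free drawing of $G'$ respecting the mapping. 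Hence $I$ is strongly column planar in $G'$ (and thus in $G$), of size at least $\sqrt{n/2}$.

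The key design choice — and the main obstacle to beating the trivial balancing — is to use part (c) of \lemref{chain} rather than part (d): the factor $1/2$ instead of $1/3$ is exactly what makes Dilworth's theorem yield $\sqrt{n/2}$ when one solves $\min(a,\,n/(2a))$ for the threshold $a = \sqrt{n/2}$; with part (d) one would only obtain $\sqrt{n/3}$. The remaining details (that augmentation to a triangulation does not weaken column planarity, and that the inherited order on $V(H)$ is indeed total) are routine.
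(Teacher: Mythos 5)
Your proposal is correct and follows essentially the same route as the paper: reduce to a plane triangulation, take the frame's partial order, and split into a long-chain case (handled by \lemref{chain}(b)--(c)) and a large-antichain case (handled by \lemref{antichain}), balanced at the threshold $\sqrt{2n}$. The only cosmetic difference is that you invoke Dilworth's theorem in its standard form (finding the large antichain directly and the long chain by pigeonhole over a chain cover), whereas the paper uses the dual statement (covering by antichains when no long chain exists); both yield the same two cases and the same bound.
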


\begin{proof}
If $|V(G)|\leq 2$, the result is trivially true. Thus we may assume that $G$ is a triangulated plane graph. Let \fg\ be a frame of $G$, let \po\ be its associated partial order, and let $\sigma$ be the associated canonical ordering. Consider a chain in \po\ of maximum size. (Hence, the chain starts with $x$ and ends with $y$). Let $H$ be the subgraph of $G$ induced by that chain, as defined in \lemref{chain}. Let $I\subseteq V(H)$ be as defined in \lemref{chain} (b). Consider any set $P$ of $|I|$ points in the plane where no two points have the same \x-coordinate and consider a bijective mapping from $I$ to $P$ such that, for every two vertices $v,w\in I$, it holds that $v\po w$ if and only if $\x(v)<\x(w)$.
By \lemref{chain}  (b),  there exists a straight-line crossing-free drawing of $G$ that respects the given mapping and thus $I$, as ordered by \po, is a strongly column planar set.  By \lemref{chain}  (c), $|I|\geq |V(H)|/2$.  Thus if the size of the maximum chain in \po\ is at least $\sqrt{2n}$, and thus $|V(H)|\geq \sqrt{2n}$,  we are done. Otherwise, by Dilworth's theorem \cite{Dilworth50}, \po\ has a partition into at most $\sqrt{2n}$ antichains. By the pigeon-hole principle, there is an antichain in that partition with at least $n/\sqrt{2n}=\sqrt{n/2}$ vertices. Let $A\subseteq V(G)$ be the maximum antichain in \po.  Consider any set $P$ of $|A|$ points in the plane where no two points have the same \x-coordinate and consider a bijective mapping from $A$ to $P$ such that, for every two vertices $v,w\in A$, it holds that  $v<_\sigma w$ if and only if $\x(v)<\x(w)$.
By \lemref{antichain},  there exists a straight-line crossing-free drawing of $G$ that respects the given mapping and thus $A$, as ordered by $<_\sigma$, is a strongly column planar set. This completes the proof since $|A|\geq \sqrt{n/2}$. \qed
\end{proof}

We conclude this section by proving a slightly stronger statement (with a slightly weaker bound when $S=V$) than \thmref{cp}. This stronger statement relies on part (d) of \lemref{chain}, and is a critical strengthening for some applications, such as partial simultaneous geometric embeddings with mappings (see \thmref{ppsge} in \secref{psgwm}).

\begin{thm}\thmlabel{scp}
Given any planar graph $G$  and any subset $S\subseteq V$, there exists $R\subseteq S$ such that $R$ is a strongly column planar set of $G$ and  $|R|\geq \sqrt{|S|/3}$.
\end{thm}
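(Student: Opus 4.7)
The plan is to mimic the Dilworth-based trade-off used in the proof of \thmref{cp}, but carried out inside $S$ rather than in all of $V(G)$, and replacing part~(c) of \lemref{chain} by part~(d) so that the vertices we ultimately keep actually come from $S$. First I would dispose of tiny cases: if $|S|\leq 3$ then $\sqrt{|S|/3}\leq 1$, so any single vertex of $S$ works (a singleton is trivially strongly column planar). Otherwise, by adding edges we may assume $G$ is a plane triangulation, since a strongly column planar set of any plane triangulation of $G$ is also strongly column planar in $G$.

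Next, fix a frame \fg\ of $G$, with associated partial order \po\ and canonical ordering $\sigma$, and apply Dilworth's theorem to the sub-poset of \po\ induced on $S$. Let $k$ be the length of a longest chain inside $S$. If $k \geq \sqrt{3|S|}$, pick such a chain $C\subseteq S$, embed it in a maximal chain of \po, and let $H$ be the outerplane subgraph of \lemref{chain}. Part~(d) of \lemref{chain} then yields a set $I\subseteq V(H)$ meeting the "independent-in-$H$" condition of part~(b) with $|I\cap C|\geq |C|/3 \geq \sqrt{|S|/3}$. I would set $R:=I\cap C$, which is a subset of $S$ of the required size. If instead $k<\sqrt{3|S|}$, Dilworth partitions $S$ into fewer than $\sqrt{3|S|}$ antichains, so by pigeonhole some antichain $A\subseteq S$ has $|A|\geq |S|/\sqrt{3|S|}=\sqrt{|S|/3}$, and I would take $R:=A$; \lemref{antichain} then gives directly that $A$, ordered by $<_\sigma$, is strongly column planar.

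The only step that is not completely mechanical is checking, in the chain case, that $R=I\cap C$ inherits strong column planarity from $I$, since part~(b) of \lemref{chain} speaks of $I$ itself. I expect this to be the main (though very mild) obstacle. I would dispatch it by observing that given any bijection from $R$ to a point set $P$ with distinct \x-coordinates respecting \po, one can freely interpolate distinct \x-coordinates for the at most $|I|-|R|$ vertices of $I\setminus R$ so that the \x-order on $I$ still agrees with \po\ (no ties to avoid since the prescribed coordinates on $R$ are distinct), and then invoke \lemref{chain}(b) on $I$ to obtain the crossing-free drawing of $G$ that respects the original mapping on $R$. Combining the two cases and the small-$S$ case yields the claimed bound $|R|\geq \sqrt{|S|/3}$. \qed
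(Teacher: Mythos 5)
Your proposal is correct and follows essentially the same route as the paper: reduce to a plane triangulation, apply Dilworth to \po\ restricted to $S$, use \lemref{chain}(d) in the long-chain case and \lemref{antichain} in the antichain case. The one point you flag as an obstacle---passing from $I$ to $R=I\cap C$---does not even arise in the paper's reading of part~(d), where the set $I$ is obtained as an independent set in the subgraph induced by $C$ and hence already satisfies $I\subseteq C$; your interpolation argument is a valid (if unnecessary) way to cover the weaker reading.
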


\begin{proof} 
If $|V(G)|\leq 2$, the result is trivially true. Thus we may assume that $G$ is a triangulated plane graph. Let \fg\ be a frame of $G$, let \po\ be its associated partial order, and let  $\sigma$ be the associated canonical ordering.
Assume first that \po\ has a chain $C$ such that $C\subseteq S$ and  $|C|\geq \sqrt{3|S|}$. Let $H$ be the subgraph of $G$ induced by a maximal chain that contains $C$ in \po, as defined in \lemref{chain}. Let $I\subseteq S$ be as defined in \lemref{chain}, (b) and (d). Consider any set $P$ of $|I|$ points in the plane where no two points have the same \x-coordinate and consider a bijective mapping from $I$ to $P$ such that, for every two vertices $v,w\in I$,  it holds that $v\po w$ if and only if $\x(v)<\x(w)$.
By \lemref{chain} (b),  there exists a straight-line crossing-free drawing of $G$ that respects the given mapping and thus $I$, as ordered by \po, is a strongly column planar set.  By \lemref{chain} (d), $I\subseteq C$ and $|I|\geq |C|/3|$.  Thus if  \po\ has a chain $C$ such that $C\subseteq S$ and  $|C|\geq \sqrt{3|S|}$, we are done. Otherwise, by Dilworth's theorem \cite{Dilworth50}, \po, when restricted to $S$, has a partition into at most $\sqrt{3|S|}$ antichains. By the pigeon-hole principle, there is an antichain $A\subseteq S$ in that partition that has at least $|S|/\sqrt{3|S|}=\sqrt{|S|/3}$ elements. Consider any set $P$ of $|A|$ points in the plane where no two points have the same \x-coordinate and consider a bijective mapping from $A$ to $P$, such that for every two vertices $v,w\in A$,  $v<_\sigma w$ if and only if $\x(v)<\x(w)$.
By \lemref{antichain},  there exists a straight-line crossing-free drawing of $G$ that respects the given mapping and thus $A$, as ordered by $<_\sigma$, is a strongly column planar set. This completes the proof since $|A|\geq \sqrt{|S|/3}$ and $A\subseteq S$. \qed
\end{proof}

\section{Universal Point Subsets}\seclabel{ups}
A set of points $P$ is \emph{universal} for a set of planar graphs if every graph from the set has a straight-line crossing-free drawing where each of its vertices maps to a distinct point in $P$.  
It is known that,  for all large enough $n$,  universal pointsets of size $n$ do not exist for all $n$-vertex planar graphs -- as first proved by de Fraysseix~\etal~\cite{dFPP90}. The authors also proved that the $O(n) \times O(n) $ integer grid is universal for all $n$-vertex planar graphs and thus a universal pointsets of size $O(n^2)$ exists. Currently the best known lower bound  on the size of a smallest universal pointset for $n$-vertex planar graphs
is $1.235n-o(n)$ \cite{DBLP:journals/ipl/Kurowski04} and the best known upper bound is $n^2/4 - O(n)$ \cite{DBLP:journals/jgaa/BannisterCDE14}. Closing the gap between $\Omega(n)$ and $O(n^2)$ is a major, and likely difficult, graph drawing problem, open since $1988$ \cite{deFraysseix:1988:SSS:62212.62254, dFPP90}.

This motivated the following notion introduced by Angelini~\etal~\cite{DBLP:conf/isaac/AngeliniBEHLMMO12}.  A set $P$ of $k\leq n$ points in the plane is a \emph{universal point subset} for all $n$-vertex planar graphs if the following holds: every $n$-vertex planar graph $G$ has a subset $S\subseteq V(G)$ of $k$ vertices and a bijective mapping from $S$ to $P$ such that there exists a straight-line crossing-free drawing of $G$ that respects that mapping. 
%

Angelini~\etal~\cite{DBLP:conf/isaac/AngeliniBEHLMMO12} proved that for every $n$ there exists a set of points of size at least $\sqrt{n}$ that is a universal point subset for all $n$-vertex planar graphs. 
Di~Giacomo~\etal~\cite{DBLP:journals/corr/abs-1212-0804}  continued this study and showed that for every $n$,  \emph{every} set $P$ of at most $(\sqrt{\log_2n}-1)/4$ points in the plane is a universal point subset for \emph{all} $n$-vertex planar graphs.  They also showed that \emph{every} one-sided convex point set $P$ of at most $n^{1/3}$ points in the plane is a universal point subset for \emph{all} $n$-vertex planar graphs. The following theorem improves all these results.  


\begin{thm}\thmlabel{ups}
Every set $P$ of at most $\sqrt{n/2}$ points in the plane is a universal point subset for all $n$-vertex planar graphs. 
\end{thm}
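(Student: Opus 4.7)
The plan is to follow the template of the proof of \thmref{cp}, with the roles of ``point set'' and ``vertex subset'' reversed: here $P$ is given (rather than freely constructed), and we choose which $|P|$ vertices of $G$ to map to $P$. First, I would reduce to the case where $G$ is a plane triangulation --- adding edges can only strengthen the conclusion --- and, by a small rotation of the plane if necessary, assume that no two points of $P$ share an \x-coordinate. Let $\mathcal{F}$ be a frame of $G$ with associated partial order $\po$ and canonical ordering $\sigma$, and write $k:=|P|$, so that $k\leq\sqrt{n/2}$ and hence $2k\leq\sqrt{2n}$.

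Next I would split on the length of the longest chain in $\po$. If $\po$ contains a chain of size at least $2k$, let $H$ be the subgraph of $G$ induced by a maximal chain containing it, so $|V(H)|\geq 2k$. By \lemref{chain}(a) and (c) there is a set $I\subseteq V(H)$ of size at least $(|V(H)|+1)/2\geq k$ satisfying the hypothesis of \lemref{chain}(b). Any $k$-element subset $I'\subseteq I$ is still a subset of the chain $V(H)$, and the hypothesis of \lemref{chain}(b) is trivially inherited by subsets, so $I'$ satisfies it as well. Sort $I'$ by $\po$ and the points of $P$ by \x-coordinate, and take the bijection that pairs these two sorted sequences in matching order; \lemref{chain}(b) then delivers the required straight-line crossing-free drawing of $G$.

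Otherwise the longest chain in $\po$ has length at most $2k-1<\sqrt{2n}$, so by Dilworth's theorem (as invoked in the proof of \thmref{cp}) $V(G)$ can be partitioned into at most $2k-1$ antichains. By pigeon-hole, some antichain has more than $n/(2k-1)>n/\sqrt{2n}=\sqrt{n/2}\geq k$ elements; take any $k$ of them, sort by the canonical order $<_\sigma$, pair them with the points of $P$ sorted by \x-coordinate, and apply \lemref{antichain}.

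The only real technical point is checking that $\sqrt{n/2}$ is precisely the right threshold for this case split: a chain of size $2k$ is exactly what \lemref{chain}(c) needs in order to produce $k$ ``outer-face'' vertices on which \lemref{chain}(b) can be applied, while the complementary case (maximum chain length less than $2k$) forces via Dilworth an antichain of size strictly greater than $k$ --- and the two bounds balance precisely when $k=\sqrt{n/2}$. Once this balance is confirmed, the embedding step in each case is an immediate appeal to \lemref{chain}(b) or \lemref{antichain}.
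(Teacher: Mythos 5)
Your proof is correct. It takes the ``direct'' route that the paper explicitly mentions but declines to write out: you inline the chain/antichain dichotomy (frame, longest chain vs.\ Dilworth partition into antichains, then \lemref{chain}(b)--(c) or \lemref{antichain}) with the threshold $2k$ versus $\sqrt{2n}$, and all the arithmetic checks out --- a chain of length $\geq 2k$ yields $|I|\geq(|V(H)|+1)/2\geq k$, and otherwise the partition into fewer than $\sqrt{2n}$ antichains forces one of size at least $\sqrt{n/2}\geq k$. The paper instead gives a two-line reduction: rotate $P$ so that no two points share an \x-coordinate, invoke \thmref{cp} to get a \emph{strongly column planar} set $R$ of size $|P|$, and observe that the definition of strong column planarity hands you exactly the required drawing for the order-respecting bijection from $R$ to the rotated $P$. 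The two arguments are mathematically identical underneath; the paper's version buys brevity and makes clear that \thmref{ups} is a formal consequence of the column-planarity statement, while yours is self-contained and makes explicit where each of the two key lemmas enters. One cosmetic remark: your sentence justifying that subsets of $I$ inherit the hypothesis of \lemref{chain}(b) is worth keeping, since that closure under taking subsets is exactly what lets you trim $I$ down to $|P|$ elements, and the corresponding step in the paper is hidden inside the (equally true but unstated) fact that a subset of a strongly column planar set is strongly column planar.
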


The proof of this lemma can be derived directly from \lemref{chain} and \lemref{antichain}, similarly to the proof of \thmref{cp}, but we will instead prove it using \thmref{cp}.

\begin{proof}
Rotate $P$ to obtain a new pointset $P'$ where no two points of $P'$ have the same \x-coordinate. By \thmref{cp}, every $n$-vertex planar graph has a strongly column planar set $R$ of size $|P|$. Thus, by the definition of strongly column planar sets, there exists a total order $\mu$ on $R$ such that given a bijective mapping from $R$ to $P'$ where for every two vertices $v,w\in R$,  $v<_\mu w$ if and only if $\x(v)<\x(w)$,
 there exists a straight-line crossing-free drawing of $G$ that respects the given mapping. Such a mapping clearly exists since  no two points of $P'$ have the same \x-coordinate.
Rotating $P'$ back to the original pointset completes the proof.   \qed
%
%
\end{proof}

It is not known if, for all $n$, there exist a universal point subset of size $n^{1/2+\epsilon}$ for some $\epsilon>0$. Better bounds are only known for outerplanar graphs. Namely, every pointset of size $n$ in general position is universal for all $n$-vertex  outerplanar graphs \cite{GMPP, DBLP:journals/comgeo/Bose02, DBLP:conf/cccg/CastanedaU96}. Should the results of Barba~\etal~\cite{BHK} apply to strongly column planar sets, then arguments equivalent to those above would show that every pointset of size $n/2$ is a universal point subset for all $n$-vertex outerplanar graphs.

\section{(Partial) Simultaneous Geometric Embeddings} 
\seclabel{psg}

\emph{Simultaneous Geometric Embeddings} were introduced by Bra{\ss}~\etal~\cite{DBLP:journals/comgeo/BrassCDEEIKLM07}.  Initially there were two main variants of this problem, one in which the mapping between the vertices of the two graphs is given and another in which the mapping is not given. Since then there has been a plethora of work on the subject for various variants of the problem -- see, for example a survey by Bl\"asius~\etal~\cite{BKR-HGD}.

\subsection{Without mapping}\seclabel{psgwom}
Whether the following statement, on simultaneous geometric embeddings, is true is an open question asked by  Bra{\ss}~\etal~\cite{DBLP:journals/comgeo/BrassCDEEIKLM07} in 2003: For all $n$ and for any two $n$-vertex planar graphs there exists a pointset $P$ of size $n$ 
 such that each of the two graphs has a straight-line crossing-free drawing with its vertices mapped to distinct points of $P$. The statement is known \emph{not} to be true when ``two'' is replaced by $7393$ and $n=35$ \cite{DBLP:conf/tjjccgg/CardinalHK12}.

This motivates a study of (partial) geometric  simultaneous embeddings -- various versions of which  have been proposed and studied in the literature \cite{BKR-HGD}. We start with the following version.

Two graphs $G_1$ and $G_2$, where $|V(G_1)|\geq |V(G_2)|$ 
are said to have a \emph{geometric  simultaneous embedding with no mapping} if there exists a pointset $P$ of size $|V(G_1)|$ such that each of the two graphs has a straight-line  crossing-free drawing where all of its vertices are mapped to distinct points in $P$. Angelini~\etal~\cite{jgt-AEFG-2015} write: ``What is the largest $k \leq n$ such that every $n$-vertex planar graph and every $k$-vertex planar graph admit a geometric simultaneous embedding with no mapping? Surprisingly, we are not aware of any super-constant lower bound for the value of $k$.''

The following theorem answers their questions. 

\begin{thm}\thmlabel{fab}
For every $n$ and every $k\leq \sqrt{n/2}$, every $n$-vertex planar graph and every $k$-vertex planar graph admit a geometric simultaneous embedding with no mapping.
\end{thm}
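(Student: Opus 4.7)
The plan is to obtain the common pointset by starting from a drawing of the smaller graph and then extending it, using \thmref{ups} as the main tool. Without loss of generality assume $k\geq 3$, since the statement is trivial otherwise.

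First, I would fix an arbitrary straight-line crossing-free drawing of the $k$-vertex planar graph $G_2$ in the plane (such a drawing exists by Fáry's theorem). Let $P$ denote the set of $k$ points used as vertex positions in this drawing of $G_2$. By hypothesis $|P|=k\leq \sqrt{n/2}$, so $P$ satisfies the cardinality condition of \thmref{ups}.

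Next, apply \thmref{ups} to $G_1$ and the pointset $P$: since $P$ is a universal point subset for all $n$-vertex planar graphs, there exists a subset $S\subseteq V(G_1)$ with $|S|=k$ and a bijection from $S$ to $P$ such that $G_1$ admits a straight-line crossing-free drawing respecting this bijection. Let $P^\star$ denote the set of $n$ points obtained as the vertex positions of this drawing of $G_1$; by construction $P\subseteq P^\star$ and $|P^\star|=n$.

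Finally I would observe that $P^\star$ is the desired common pointset. Indeed, $G_1$ is drawn straight-line and crossing-free with its vertex set mapped bijectively onto $P^\star$, while $G_2$ is drawn straight-line and crossing-free with its vertex set mapped bijectively onto $P\subseteq P^\star$, which by definition is a geometric simultaneous embedding with no mapping. I do not anticipate a real obstacle: the only content of the argument is invoking \thmref{ups}, and the cardinality bound $k\leq \sqrt{n/2}$ is used precisely to match the hypothesis of that theorem. \qed
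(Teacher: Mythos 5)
Your proposal is correct and follows essentially the same route as the paper: draw $G_2$ via F\'ary's theorem, apply \thmref{ups} to place $k$ vertices of $G_1$ on that pointset, and take the full vertex set of the resulting drawing of $G_1$ as the common pointset. No gaps.
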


\begin{proof}
Let $G_1$ and $G_2$ be the two given planar graphs with $|V(G_1)|=n$ and $|V(G_2)|=k$.  By F\'ary's theorem, $G_2$ has a straight-line crossing-free drawing on some set, $P_2$, of $k$ points. By \thmref{ups}, $G_1$ has a straight-line crossing-free drawing where $|P_2|$  vertices of $G_1$ are mapped to distinct points in $P_2$. Consider now the set of points, $P$, defined by the vertices in the drawing of $G_1$. This set is our desired pointset as it is a set of $n$ points such that each of $G_1$ and $G_2$ has a straight-line crossing-free  drawing where all of its vertices are mapped to the points in $P$. \qed
\end{proof}

Here is another variant of the (partial) geometric  simultaneous embedding problem. For $k\leq n$, two $n$-vertex planar graphs $G_1$ and $G_2$ are said to have a  \emph{$k$-partial simultaneous geometric embedding with no mapping} ($k$-PSGENM) if there exists a set $P$ of at least $k$ points in the plane such that  each of the two graphs has a straight-line crossing-free drawing where $|P|$ of its vertices are mapped to distinct points of $P$. 
 Recall that Angelini~\etal~\cite{DBLP:conf/isaac/AngeliniBEHLMMO12} proved that for every $n$ there exists a set of points of size at least $\sqrt{n}$ that is a universal point subset for all $n$-vertex planar graphs. This implies that, for all $n$, any two $n$-vertex planar graphs have an $\sqrt{n}$-partial simultaneous geometric embedding with no mapping. Note however that this does not imply \thmref{fab}. Namely, if one starts with a straight-line crossing-free drawing of the smaller   graph $G_2$ (say on $\sqrt{n}$ vertices), there is no guarantee with this result that the bigger, $n$-vertex graph, $G_1$ can be drawn while using all the points generated by the drawing of $G_2$.

\subsection{With Mapping}\seclabel{psgwm}

The notion of \emph{$k$-partial simultaneous geometric embedding with mapping} ($k$-PSGE) is the same as $k$-PSGENM  except that a bijective mapping between $V(G_1)$ and $V(G_2)$ is given and the two drawings have a further restriction that if $v\in V(G_1)$ is mapped to a point in $P$ then the vertex $w$ in $V(G_2)$ that $v$ maps to, has to be mapped to the same point in $P$.  In other words, two $n$-vertex planar graphs $G_1$ and $G_2$ on \emph{the same vertex set}, $V$, are said to have a \emph{$k$-partial simultaneous geometric embedding with mapping} ($k$-PSGE) if there exists a straight-line crossing free drawing $D_1$ of $G_1$ and  $D_2$ of $G_2$ such that there exists a subset $V'\subseteq V$ with $|V'|\geq k$ and each vertex $v\in V'$ is represented by the same point in $D_1$ and $D_2$.

It is known that, for every large enough $n$, there are pairs of $n$-vertex planar graphs that do not have an $n$-partial simultaneous geometric embedding with mapping, that is, an  $n$-PSGE \cite{DBLP:journals/comgeo/BrassCDEEIKLM07}. In fact the same is true for  simpler families of planar graphs, for example for a tree and a path \cite{DBLP:journals/jgaa/AngeliniGKN12}, for a planar graph and a matching \cite{DBLP:journals/jgaa/AngeliniGKN12} and for three paths \cite{DBLP:journals/comgeo/BrassCDEEIKLM07}.

%

$k$-PSGE was introduced by Evans~\etal~\cite{DBLP:conf/gd/EvansKSS14} who proved (using their column planarity result) that any two $n$-vertex trees have an $11n/17$-PSGE.  
Barba~\etal~\cite{BHK} proved that any two $n$-vertex outerplanar graphs have an  $n/4$-PGSE.  Evans~\etal~\cite{DBLP:conf/gd/EvansKSS14}  also observed  that the main untangling result by Bose~\etal\ \cite{DBLP:journals/dcg/BoseDHLMW09} implies that every pair of $n$-vertex planar graphs has an $\Omega(n^{1/4})$-PSGE.  Namely, start with a straight-line crossing-free drawing of $G_1$. Since the vertex sets of $G_1$ and $G_2$ are the same, the drawing of $G_1$ (or rather the drawing of its vertex set) defines a straight-line drawing of $G_2$. Untangling $G_2$ such that $\Omega(n^{1/4})$ of its vertices remain fixed (which is possible by \cite{DBLP:journals/dcg/BoseDHLMW09}) gives the result.  

\begin{thm}\thmlabel{psge}\cite{BHK}
Every pair of $n$-vertex planar graphs has an $\Omega(n^{1/4})$-partial simultaneous geometric embedding with mapping, that is, it has an $\Omega(n^{1/4})$-PGSE. 
\end{thm}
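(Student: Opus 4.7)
The plan is essentially to formalize the argument that is already sketched in the paragraph immediately preceding the statement: combine F\'ary's theorem with the untangling bound of Bose~\etal~\cite{DBLP:journals/dcg/BoseDHLMW09}. Since $G_1$ and $G_2$ share the same vertex set $V$, any placement of $V$ in the plane determines a straight-line geometric graph for both $G_1$ and $G_2$ simultaneously, and $k$-PSGE is precisely the statement that we can pick a placement together with vertex-relocations for $G_2$ that fix $\Omega(n^{1/4})$ of the coordinates.

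First I would invoke F\'ary's theorem to obtain a straight-line crossing-free drawing $D_1$ of $G_1$. This drawing embeds $V$ as a set $P$ of $n$ points in the plane (in general position, up to an infinitesimal perturbation that preserves planarity of $D_1$). Next, I would regard $P$, together with the identity mapping on $V$, as defining a straight-line geometric graph $G_2^\ast$ whose underlying combinatorial graph is $G_2$. This graph $G_2^\ast$ is an $n$-vertex geometric planar graph (planar because $G_2$ is planar; it may of course have crossings).

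Then I would apply the main untangling theorem of Bose~\etal~\cite{DBLP:journals/dcg/BoseDHLMW09}, which states that any $n$-vertex geometric planar graph can be untangled into a plane geometric graph while keeping the locations of at least $\Omega(n^{1/4})$ vertices fixed. Applying this to $G_2^\ast$ produces a straight-line crossing-free drawing $D_2$ of $G_2$ in which a subset $V' \subseteq V$ of size at least $\Omega(n^{1/4})$ retains the same coordinates as in $D_1$. Taking $D_1$ and $D_2$ as the two drawings and $V'$ as the common subset yields an $\Omega(n^{1/4})$-PSGE.

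There is really no main obstacle here, since all the heavy lifting has been done by F\'ary and by the untangling theorem of Bose~\etal; the only thing worth being careful about is the degenerate case in which $D_1$ places multiple vertices on a common vertical line or causes other accidental alignments, which can be resolved by a generic perturbation that is small enough to preserve the non-crossing property of $D_1$. This observation, together with the two results above, gives the theorem.
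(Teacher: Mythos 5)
Your proposal is correct and is essentially identical to the paper's own argument (given in the paragraph preceding the theorem and attributed to Evans~\etal): draw $G_1$ crossing-free via F\'ary's theorem, reinterpret that vertex placement as a geometric drawing of $G_2$, and untangle $G_2$ keeping $\Omega(n^{1/4})$ vertices fixed by the result of Bose~\etal. Your extra remark about perturbing to avoid degenerate alignments is a harmless refinement the paper does not bother to spell out.
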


However, the above untangling argument fails if we try to apply it one more time. Namely, consider the following generalization of the $k$-PGSE problem. Given any set  $\{G_1, \dots, G_p\}$ of $p\geq 2$  $n$-vertex planar graphs on \emph{the same vertex set}, $V$, we say that $G_1, \dots, G_p$ have a \emph{$k$-partial simultaneous geometric embedding with mapping} ($k$-PSGE) if there exists a straight-line crossing-free drawing $D_i$ of each $G_i$, $i\in\{1,\dots,p\}$ such that  there exists a subset $V'\subseteq V$ with $|V'|
\geq k$ and each vertex $v\in V'$ is represented by the same point in all drawings $D_i$, $i\in\{1,\dots,p\}$.

If we try to mimic the earlier untangling argument that proves \thmref{psge}, it fails  for $p=3$ already since we cannot guarantee that when $G_3$ is untangled the set of its vertices that stays fixed has a non-empty intersection with the set that remained fixed when untangling $G_2$. It is here that part (d) of \lemref{chain} is needed, or rather the stronger result on column planarity from \thmref{scp}.

\begin{thm}\thmlabel{ppsge}
Any set of $p\geq 2$ $n$-vertex planar graphs has an $\Omega(n^{1/4^{(p-1)}})$-partial simultaneous geometric embedding with mapping, that is, it has an $\Omega(n^{1/4^{(p-1)}})$-PGSE. 
\end{thm}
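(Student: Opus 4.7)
The plan is to proceed by induction on $p$, with base case $p=2$ supplied by \thmref{psge}. For the inductive step, assume the claim for $p-1$; applied to $G_1,\dots,G_{p-1}$ it yields straight-line crossing-free drawings $D_1,\dots,D_{p-1}$ and a set $V'\subseteq V$ with $|V'|=\Omega(n^{1/4^{(p-2)}})$ such that every vertex of $V'$ occupies the same point in each $D_i$. The goal is to construct a crossing-free drawing $D_p$ of $G_p$ in which many vertices of $V'$ retain their inherited positions.

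First I would invoke \thmref{scp} on $G_p$ with $S:=V'$, obtaining $R\subseteq V'$ that is strongly column planar in $G_p$ with $|R|\geq\sqrt{|V'|/3}$, together with an associated total order $\mu$ on $R$. By strong column planarity, \emph{any} injective assignment of distinct $x$-coordinates to $R$ respecting $\mu$, together with an arbitrary assignment of $y$-coordinates, extends to a crossing-free drawing of $G_p$. The main obstacle is that the positions of $R$ are already dictated by $V'$, so there is no reason for their $x$-coordinate order to coincide with $\mu$.

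To overcome this, I would rotate all of $D_1,\dots,D_{p-1}$ by a generic small angle so the points of $R$ have pairwise distinct $x$-coordinates; rotation preserves planarity and the equality of positions across the $D_i$. Let $\pi$ be the resulting $x$-coordinate order on $R$ and view the relationship between $\mu$ and $\pi$ as a permutation of $\{1,\dots,|R|\}$. By the Erd\H{o}s--Szekeres theorem, there is a subset $R'\subseteq R$ of size at least $\sqrt{|R|}$ on which $\pi$ and $\mu$ are either identical or complete reverses of each other. In the reversed case, reflect $D_1,\dots,D_{p-1}$ across a vertical line, which preserves planarity and flips $\pi$; thus $\pi$ and $\mu$ agree on $R'$.

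Finally, I would insert fresh intermediate $x$-values between the inherited $x$-coordinates of the vertices of $R'$ so as to extend their positions to a full assignment for $R$ respecting $\mu$, pairing the new $x$-values with arbitrary $y$-values. Strong column planarity of $R$ then produces the desired drawing $D_p$, which coincides with $D_1,\dots,D_{p-1}$ at every vertex of $R'$. Setting $V'':=R'$ closes the induction, since
\[
|V''|\;\geq\;\sqrt{|R|}\;\geq\;\bigl(|V'|/3\bigr)^{1/4}\;=\;\Omega\!\bigl(n^{1/4^{(p-1)}}\bigr).
\]
The crux of the argument is reconciling the order $\mu$ imposed by strong column planarity of $G_p$ with the already-fixed positions of $V'$ from previous rounds; the Erd\H{o}s--Szekeres step handles this mismatch at the cost of a square root per level of recursion, which is precisely what produces the $4^{(p-1)}$-th root bound.
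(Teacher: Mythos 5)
Your proposal is correct and follows essentially the same route as the paper's own proof: induction on $p$, \thmref{scp} applied to $V'$ to get a strongly column planar $R$, Erd\H{o}s--Szekeres to reconcile $\mu$ with the inherited $x$-order (mirroring in the reversed case), and strong column planarity to draw $G_p$ fixing $R'$. Your explicit step of inserting intermediate $x$-values to extend the placement of $R'$ to all of $R$ is a small point the paper glosses over (it simply asserts $R'$ is itself strongly column planar), but the argument is the same.
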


\begin{proof}
Let $\{G_1, \dots, G_p\}$ be the given set of $p$ $n$-vertex planar graphs. The proof is by induction on $p$. The base case, $p=2$, is true by \thmref{psge}. Let $p\geq 3$ and assume by induction that the set $\{G_1,\dots, G_{p-1}\}$ has an $\Omega(n^{1/4^{(p-2)}})$-PGSE. Let $V'\subseteq V$ be the set from the definition of $k$-PSGE and let $P'$ be the set of $|V'|$ points that $V'$ is mapped to in the drawings $D_1, \dots, D_{p-1}$. Thus $|V'|\in \Omega(n^{1/4^{(p-2)}})$ by induction. We may assume that no pair of points in $P'$ has the same \x-coordinate as otherwise we can just rotate  the union of $D_1, \dots, D_{p-1}$. By \thmref{scp}, there exists $R\subseteq V'$ that is strongly column planar in $G_p$ and $|R|\geq \sqrt{|V'|/3}$.  Since the vertices of $V'$ are bijectively mapped to $P'$, that mapping defines a bijective mapping from $R$ to a subset $P_R$ of $P'$. Consider the total order $\mu$ of $R$ (the total order from the definition of strongly column planar sets) and the total order $\phi$ of $R$ as defined by the \x-coordinates of $P_R$. By the Erd\H{o}s--Szekeres theorem \cite{ES35, steele1995variations}, there exists a subset $R'$ of $R$ of at least $\sqrt{|R|}\geq (|V'|/3)^{1/4}$ vertices such that the order of $R'$ in $\mu$ is the same or reverse as the order of $R'$ in $\phi$. In the second case the union of all the drawings of $D_1, \dots, D_p$ can be mirrored such that the order of $R'$ in $\mu$ is the same as the order of $R'$ in $\phi$. Thus in both cases, we can apply \thmref{scp}. Since the vertices of $R$ are bijectively mapped to $P_R$, this defines a bijective mapping from $R'$ to a subset $P_R'$ of $P_R$. Since $R'$ is strongly column planar in $G_p$, we can apply \thmref{scp} to conclude that $G_p$ has a straight-line crossing-free drawing $D_p$ that respects the mapping from $R'$ to $P_R'$ and thus each vertex $v\in R'$ is represented by the same point in all drawings $D_i$, $i\in\{1,\dots,p\}$. Since $|V'|\in \Omega(n^{1/4^{(p-2)}})$, and $|R'| \geq (|V'|/3)^{1/4}$, the lower bound holds. \qed
\end{proof}

Note that the definition of $k$-PSGE, as introduced in Evans~\etal~\cite{DBLP:conf/gd/EvansKSS14}, has one additional requirement, as compared with the definition used here. Namely, the additional requirement states that if $v, w\in V$ are mapped to a same point in $D_i$ and $D_j$,  then $v=w$. 
However this additional requirement can always be met by the fact that it is possible to perturb any subset of  vertices of a geometric plane graph without introducing crossings. More precisely, for any geometric plane graph there exists a value $\epsilon>0$ such that each vertex can be moved any distance of at most $\epsilon$, and the resulting geometric graph is also crossing-free.\footnote{The maximum value $\epsilon$ for which this property holds is called the tolerance of the arrangement of segments. This concept, both for the geometric realization and the combinatorial meaning of the graphs was systematically studied in \cite{Ramos-phd, ferran-pedro}.} 

\section{Conclusion}


The main purpose of this note is to draw attention to \lemref{chain} and \lemref{antichain} in the current form as they seem to have applications to numerous, some seemingly unrelated, graph drawing problems as evidenced by the results highlighted in the previous sections. The two lemmas appear in the current form for the first time here. Their original formulation was tailored towards specific application (untangling) and not directly applicable to any of the above mentioned problems. 



\paragraph{\bf Acknowledgement.} 
Many thanks to Pat Morin and David R. Wood for very helpful comments on the preliminary version of this article. Similarly, many thanks to the anonymous referees of GD 2015, especially the one who painstakingly corrected my ever random selection from $\{\textup{the, a}, \{\}\}$.

\bibliographystyle{myBibliographyStyle}
\bibliography{paper,2trees}

\end{document}